\documentclass[review]{elsarticle}

\usepackage{lineno,hyperref}
\modulolinenumbers[1]





\bibliographystyle{model2names}\biboptions{authoryear}






\usepackage{natbib}
\usepackage[english]{babel}
\usepackage{amsthm,amsmath}
\usepackage{graphicx}
\usepackage{epstopdf}
\usepackage{amsfonts,amssymb}
\usepackage{epsfig,color}

\usepackage{algorithmicx}
\usepackage{algpseudocode}
\usepackage{algorithm}

\newcommand{\lo}{\mathrm{low}}
\newcommand{\up}{\mathrm{upp}} 
\newcommand{\pmin}{\mathrm{min}}
\providecommand{\1}{\mathbf{1}} 

\newtheorem{theorem}{Theorem}[section]



\begin{document}

\begin{frontmatter}

\title{New methods for multiple testing in permutation inference for the general linear model}

\author{Tom\'a\v s Mrkvi\v cka$^{1,2}$}
\address{Faculty of Economics, University of South Bohemia, Czech Republic}
\fntext[myfootnote]{Corresponding author: mrkvicka.toma@gmail.com}
\fntext[myfootnote]{Dpt. of Applied Mathematics and Informatics, Faculty of Economics, University of South Bohemia, Studentsk{\'a} 13, 37005 \v{C}esk\'e Bud\v{e}jovice, Czech Republic}

\author{Mari Myllym{\"a}ki}
\address{Natural Resources Institute Finland (Luke), Helsinki, Finland}

\author{Mikko Kuronen}
\address{Natural Resources Institute Finland (Luke), Helsinki, Finland}

\author{Naveen Naidu Narisetty}
\address{University of Illinois, Urbana-Champaign, USA}




\begin{abstract}
Permutation methods are commonly used to test significance of regressors of interest in general linear models (GLMs) for functional (image) data sets, in particular for neuroimaging applications as they rely on mild assumptions.
Permutation inference for GLMs typically consists of three parts:
choosing a relevant test statistic,
computing pointwise permutation tests and
applying a multiple testing correction.
We propose new multiple testing methods as an alternative to the commonly used maximum value of test statistics across the image. The new methods improve power and robustness against inhomogeneity of the test statistic across its domain. The methods rely on sorting the permuted functional test statistics based on pointwise rank measures; still they can be implemented even for large brain data. The performance of the methods is demonstrated through a designed simulation experiment, and an example of brain imaging data.
We developed the R package GET which can be used for computation of the proposed procedures.

\end{abstract}

\begin{keyword}
Function-on-scalar regression\sep
General linear model \sep
Global envelope test \sep
Graphical method \sep
Multiple testing correction \sep
Permutation inference

\end{keyword}

\end{frontmatter}


\section{Introduction}

General linear models (GLMs) are among the most commonly used statistical tools in the field of neuroimaging for analyzing imaging data \citep{Christensen2002}. To perform statistical inference based on GLMs, one approach is to use parametric methods which rely on stringent assumptions, such as the normality of the random errors in the GLM, homogeneity and independence of the random errors across the image. Alternatively, one can use non-parametric methods with weaker assumptions about the data \citep[see e.g.][Table 1]{WinklerEtal2014}. 

Permutation tests are a class of non-parametric methods, which have a long history going back to \citet{Fisher1935}. 
Fisher demonstrated that the null hypothesis could be tested by observing, 
how often the test statistic computed from permuted observations would be more extreme than the same statistic computed without permutation.
Even though the data in neuroimaging applications are commonly images, the same principle still holds;
a detailed introduction to the permutation inference for the GLMs for neuroimage data can be found in \citet{WinklerEtal2014}.
Another advantage of the non-parametric methods is that they do not suffer from the false positive rates reported by \citet{EklundEtal2016} for parametric approaches. 

Analysis based on permutation tests consists of several crucial steps. First, a suitable test statistic that is computed for each single {location (}spatial point, voxel, vertex or face) of an image has to be chosen such that it is informative about the studied null and alternative hypotheses and homogeneous across the image. Second, the appropriate  permutation scheme has to be used to generate the permutations from the studied null hypothesis. The permutations are performed between subjects, thus throughout the whole paper we discuss the second-level analysis. Finally, an appropriate multiple testing correction has to be applied to significance results obtained for all spatial {locations} by the permutation test. 

Commonly used test statistics include the $t$- and $F$-statistics,
or the $G$ statistic defined in \citet{WinklerEtal2014} which generalises the classical statistics into various cases with heteroscedasticity.
However, a serious limitation of the $F$ or $G$ statistics is that they are not pivotal across different locations of the image in terms of the entire distribution, but only in terms of the first and second moments when the errors are not normally distributed. Therefore, if the error distribution is non-normal and inhomogeneous across the image, the distribution of these test statistics varies across the image causing the desired quantiles of the test statistics to vary as well. For example, if the error term has lognormal distribution with parameters 1 and 2 on one side of the image and with parameters 1 and 0.5 on another part of the image, then the ratio of 95\% quantiles of the $F$-statistic computed from 20 observations is equal to 1.37. The ratio of 99\% quantiles is 1.64. (These ratios were computed from simulated sample of 10000 $F$-statistics under the null hypothesis.) If this heterogeneity is ignored, e.g.\ by taking the maximum $F$-statistic across the spatial points as the test statistic, 
some or all spatial points where the null model is broken may be overlooked. 
In quantitative manner, this can bring a substantial loss of power as we demonstrate in our simulation study. 

Due to the non-pivotal nature of the $F$ and $G$ statistics, a more suitable choice is the permutation $p$-value, which is automatically a pivotal statistic, i.e.\ its distribution does not change across the image. Therefore, it would be convenient to perform the multiple testing correction by taking the minimum $p$-value across the image as the test statistic. However, the permutation $p$-value has another serious disadvantage: due to its discreteness, 
the (minimum) $p$-values obtained from permutations contain ties and the resulting test tends to be conservative, leading to loss of power as well \citep{PantazisEtal2005}.
This disadvantage is enormous especially in the case of high resolution images where the number of permutations can not be large due to computational limitations. This is further demonstrated in the introductory example below.

Another problem arises for test statistics which accumulate the data from the surrounding area of a single spatial {location}, e.g.\ the local area above a given threshold around a spatial {location}. Namely, if the spatial autocorrelation of the data is inhomogeneous across the image, then the sizes of the areas above the thresholds in different parts of the image are not comparable. This problem was recently treated for the special case of cluster size permutation tests in \citet{HAYASAKA2004676} and \citet{SALIMIKHORSHIDI20112006}.
In this paper we discuss solutions for the case of inhomogeneous spatial autocorrelation and also for the case of inhomogeneous distribution of the test statistic.

The classical multiple testing procedures, computing the maximum $F$ ($F$-max) or the minimum $p$-value ($p$-min) statistic across all spatial points, control the family wise error rate (FWER) \citep[see e.g.][]{WinklerEtal2014}. 
This paper introduces three new multiple testing corrections for the permutation inference for the GLMs, controlling the FWER in a similar manner. 

The new methods are alternative solutions to the ties problem of $p$-values.
The first correction is based on the extreme rank length (ERL) measure that was originally proposed by \citet{MyllymakiEtal2017} and \citet{NarisettiNair2016} for spatial and functional data analysis.
The other two corrections rely on the continuous rank  measure (Cont) introduced in the technical report of \citet{Hahn2015} and the area rank measure (Area) presented in this work first time. 
We carefully describe the homogeneity assumptions of the $F$-max, $p$-min and new methods and further their sensitivity to different types of extremeness of the test statistic.
By a simulation study, 
we illustrate the power of the new and existing tests under different scenarios, both when the homogeneity assumptions of the multiple correction methods are met and when not.
A further asset of the methods, namely a graphical interpretation of the test results in a similar manner as in global envelope testing \citep{MyllymakiEtal2017, MrkvickaEtal2016, MrkvickaEtal2018}, is illustrated by a real data example of autism brain imaging data \citep{MartinoEtal2014}.
We conclude that the choice of the measure for multiple correction should be done on the basis of reasonable assumptions of different types of homogeneity, and the expected type of extremeness of the test statistic.

{The proposed tests are implemented in the R library GET \citep{MyllymakiEtal2017, MyllymakiMrkvicka2020}.}

\subsection{Introductory example}\label{sec:introexample}

To illustrate the proposed methods, we studied the autism brain imaging data collected by resting state functional magnetic resonance imaging (R-fMRI) \citep{MartinoEtal2014}. The raw fMRI data contains measurements from 539 individuals with the autism spectrum disorder (ASD) and 573 typical controls (TC). When {the number of individuals in the analysis is large}, the test statistics can be well approximated by the limiting $F$ distribution. Therefore, in order to show the differences between the methods, we have chosen a small number of individuals{, namely the patients measured in Oregon Health and Science University with 13 subjects with Autism Spectrum Disorders (ASD), 15 Typical Controls (TC) and homogeneity in sex}. In fact, when the whole dataset of 1112 patients was used in our analysis, 
the $F$-max procedure worked equally to our proposed methods. 
The imaging measurement for local brain activity at resting state was fractional amplitude of low frequency fluctuations \citep{ZouEtal2008}.
The whole brain was analyzed consisting of 175 493 voxels. The difference between the groups was studied while age was taken as a nuisance factor. The $F$-max, $p$-min and new tests were performed using the same 50 000 and 100 000 permutations. The experiment was run using the R package GET \citep{MyllymakiMrkvicka2020} and took 10 hours using two cores and 10 Gb memory of an ordinary laptop. Most of this time was allocated to calculating the $F$-statistics for each permutation and each voxel, which were needed for all tests.

Table \ref{tab:pval} shows the $p$-values of all the investigated methods. 
While the differences in the $p$-values are small between the $F$-max and new methods, only the ERL and Area tests rejected 
the null hypothesis of no group effect at the strict significance level 0.05.
An essential difference between the $F$-max and new methods is in their critical bounds:
Figure \ref{fig:upenv} shows the 95\% global upper envelope of the global Area test for one slice of the brain; Appendix B presents the results for the whole brain. 
This upper envelope is adjusted to the variability of the 95\% global quantile of the $F$-statistics, ranging from 20 to 70 for the given slice. If the empirical $F$-statistic crosses the 95\% global quantile, the null hypothesis is rejected. 
On the other hand, the critical bound of the $F$-max test is constant (48.4), therefore the $F$-max procedure can overlook significant voxels where the variability of the $F$-statistics is smaller.

\begin{figure}[ht!]
    \centering
    \includegraphics{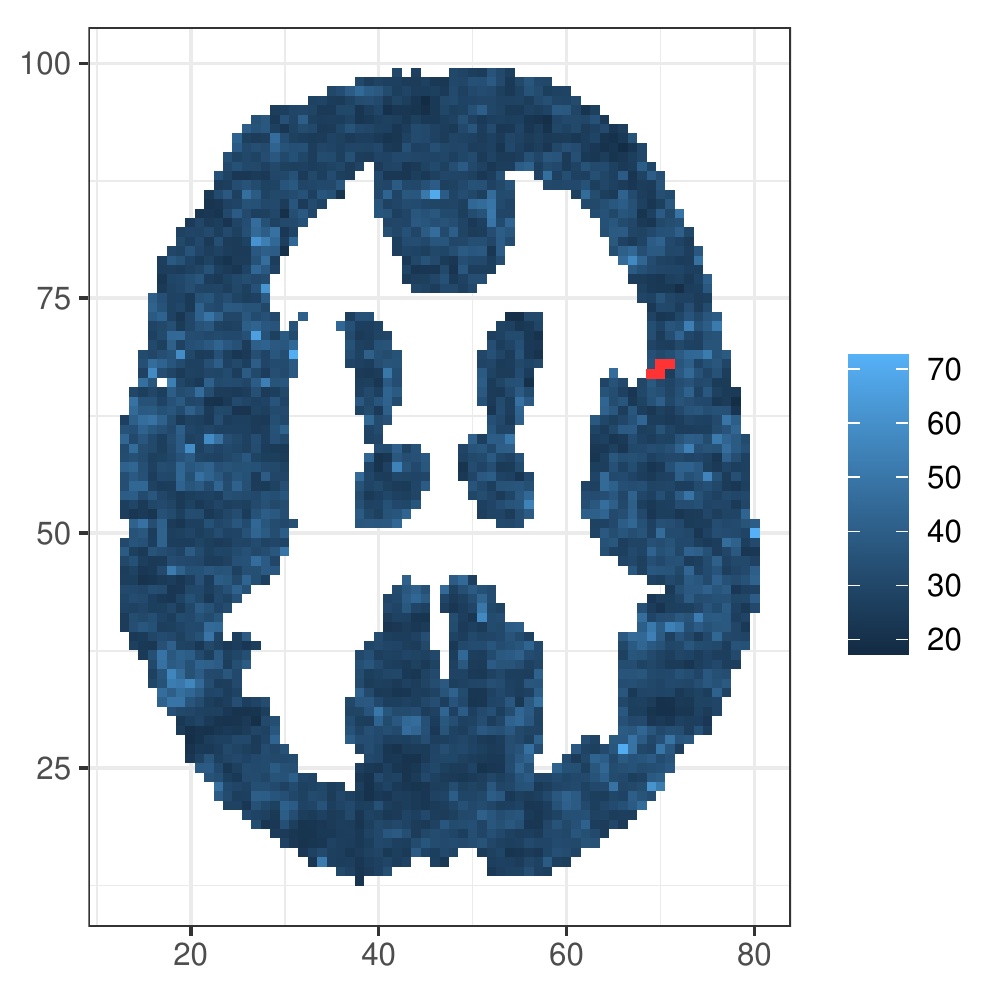}
    \caption{Upper envelope (blue) of the global area rank test and significant voxels (red) on slice 45 when testing the effect of the group variable in abide brain data subset using 100\,000 permutations. }
    \label{fig:upenv}
\end{figure}

\begin{table}[ht]
\centering 
\begin{tabular}{rrrrrr}
 \hline
Number of permutations & $F$-max & $p$-min & ERL & Cont  & Area \\
\hline
50000 &  0.058 &  0.592 &  0.032 &  0.105 &  0.040\\
100000 &  0.058 &  0.420 &  0.041 &  0.053 &  0.040\\
\hline
\end{tabular}
\caption{The $p$-values for the test of the effect of the group variable in the example brain data subset computed by investigated methods. The ERL, Cont and Area refer to the new methods to be introduced in this paper.}\label{tab:pval}
\end{table}

The $p$-min procedure, which is also adjusted to the variability of the statistics across the image, was highly conservative for 50 000 or 100 000 permutations (Table \ref{tab:pval}). 
Figure \ref{fig:rank1ties} further shows the proportion of permutations which achieved the most extreme $F$-statistic at least for one voxel for different numbers of permutations. This value corresponds to the minimal $p$-value that the $p$-min procedure can achieve. This example with the total of 10 000 000 permutations was run on a computing server with hundreds of cores. Thus if one needs the $p$-min procedure to give an answer for the test of the whole brain, 6 000 000 of permutations has to be used in minimum to achieve the minimal $p$-value 0.02 (which is still not very precise procedure){, and this is typically too much}.

The aim of this paper is to give three solutions for breaking the ties of the $p$-min procedure which can be used with reasonable number of permutations, i.e.\ in the time comparable to the time required by the $F$-max procedure. The proposed methods also provide the correct significance level as it is shown by simulation study in Section \ref{se:se} and also by an experiment similar to one proposed in \citet{EklundEtal2016}, which we report in Appendix A.

\begin{figure}[ht!]
    \centering
    \includegraphics[width=\textwidth]{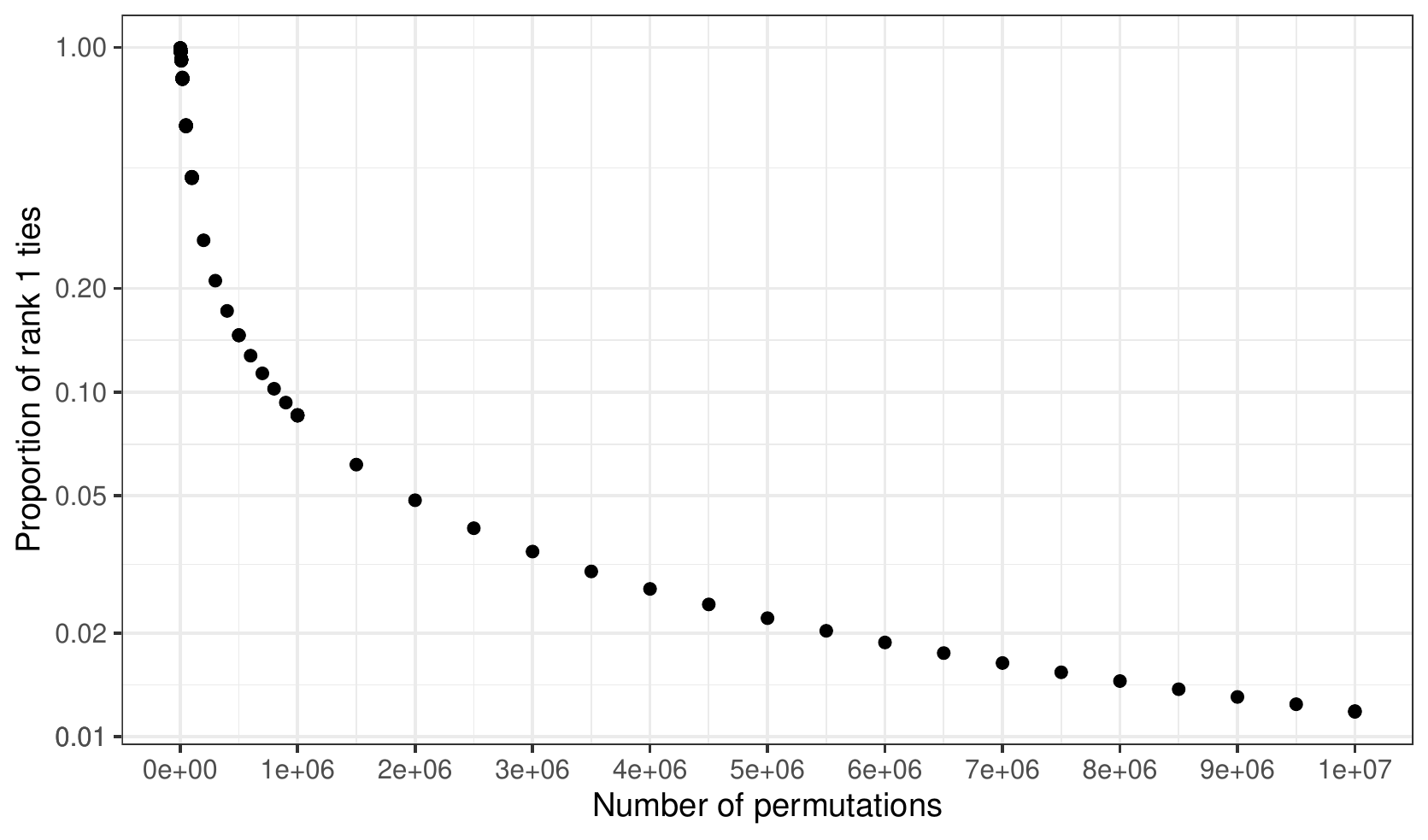}
    \caption{Proportion of rank 1 ties in the $p$-min procedure with respect to the number of permutations, when testing the effect of the group variable in the example brain data.}
    \label{fig:rank1ties}
\end{figure}

\section{Multiple testing correction for permutation methods}

Let us consider the GLM
\begin{equation}\label{eq:GLM}
\mathbf Y(r) = \mathbf X(r) \beta (r) + \mathbf Z(r) \gamma (r) + \epsilon (r), 
\end{equation}
where the argument $r \in I \subset \mathbb R^d$ determines a
spatial point of the image or, generally, {a location where the $d$-dimensional functions, $d\geq 1$, are observed}. The number of {locations} $r \in I$ will be denoted by $n$. 
For every {location} $r$, we consider a one-dimensional GLM with $\mathbf X(r)$ being a $n \times k$ matrix of regressors of interest, $\mathbf Z(r)$ being a $n \times l$ matrix of nuisance regressors, $\mathbf Y(r)$ being a $n \times 1$ vector of observed data and $\epsilon (r)$ being a $n \times 1$ vector of random errors with mean zero and finite variance $\sigma^2(r)$ for every $r \in I$. Further assumptions about the error structure will be given along the definitions of the multiple correction methods below. 
Further, $\beta(r)$ and $\gamma(r)$ are the regression coefficient vectors of dimensions $k \times 1$ and $l \times 1$, respectively.
Often factors are given for the whole image, as in our examples as well. In this case the regressors do not depend on the index $r$ and so the matrices $\mathbf X(r)$ and $\mathbf Z(r)$ are identical for every $r \in I$. However, this simplification is not necessary.
The null hypothesis to be tested is 
$$H_0: \mathbf C (r) \beta (r) = 0,\quad \forall\, r \in I, $$ 
where $\mathbf C (r)$ is a $t \times k$ matrix of $t$ contrasts of interest. 

The GLM setup includes a wide range of applications in neuroimaging, see \citet{WinklerEtal2014} for worked examples. Under various setups, a nonparametric pointwise permutation test can be applied to obtain a set of $p$-values, {$\{p(r): r \in I\}$}. 
We refer to the detailed description of pointwise permutation tests in \citet{WinklerEtal2014}.
Shortly, first a test statistic $T(r)$ is chosen from a vast number of possibilities, reflecting the tested null model $H_0$ and heteroscedasticity. A common choice is the $F$-statistic \citep{Christensen2002},
but in principle any statistic where extreme values reflect evidence against the null hypothesis could be used. 
Then permutations are used to obtain the distribution of $T(r)$ under the null hypothesis $H_0$. The choice of the permutation scheme is important for the performance of the method. We chose the permutation of the residuals under the null model \citep{FreedmanLane1983}, which is approximate, but according to \citet{AndersonBraak2003} it is the most precise permutation method in the case of nuisance effects \citep[see also][]{WinklerEtal2014}. If there are no nuisance regressors, except the constant, permutation of raw data can be used instead; it is however equivalent to performing the Freedman-Lane algorithm since the fit of the null model is constant in this case. In this case, the test is exact.
The last step is to apply a multiple testing correction.

Following sections describe various multiple testing corrections in these permutation procedures.
We assume that the same $J$ permutations have been applied for every location $r\in I$.
We denote the test statistic calculated for the original data by $T_0(r)$ and the corresponding test statistics calculated for each permutation $j$, $j=1,\dots,J$, by $T_1(r), \dots, T_J(r)$.
{Further, $p_{0}(r)$, $r\in I$, denotes the set of pointwise $p$-values for the test statistic $T_0(r)$ of the original data, and
by $p_{j}(r)$ the corresponding $p$-values for each permutation $j$, $j = 1, \ldots , J$.}

\subsection{The $p$-min approach}

In the $p$-min approach, the minimum of the pointwise $p$-values,
\begin{equation}
p^\text{min}_j = \min_{r \in I} p_{j}(r),     
\end{equation}
are calculated for all $j = 0, 1, \dots, J$, and 
the Monte Carlo $p$-value is obtained as
$$ 
p^\text{min} = \frac{1}{J+1} \sum_{j=0}^J \mathbb I (p^\text{min}_j \leq p^\text{min}_0).
$$ 
Here $\mathbb I$ denotes the indicator function. 

The pointwise {$p_j(r)$}-values are pivotal, therefore the $p$-min approach needs no further assumptions about the random error $\epsilon (r)$.
However, since the $p$-values of the permutation test can achieve only values $\frac{1}{J+1}, \frac{2}{J+1}, \ldots , 1$, many ties can appear {among $p^\text{min}_0, p^\text{min}_1, \dots, p^\text{min}_J$}, especially for large $n$, 
leading to a conservative test.

\subsection{Refinements of the $p$-min approach} 

The $p$-min approach is in fact equivalent to the global rank envelope test defined by \citet{MyllymakiEtal2017} for spatial processes. I.e., 
the $p_j^\text{min}$ is after a {simple} normalization equivalent to the extreme rank measure $R_j$: 
\begin{equation}\label{eq:extremerank}
(J+1) p_j^\text{min} = R_j = \min_{r\in I} (J+2-R_j(r)),    
\end{equation} 
where $R_j(r)$ is the pointwise rank of the test statistic $T_j(r)$ among $\{T_0(r), \ldots , T_J(r)\}$ such that the smallest value obtains the rank 1. The measures and the corresponding global envelopes are defined here as one-sided since the extremeness of the test statistic $T(r)$ used in the GLM is usually realized only for high values. This one-sided alternative also leads to {$J+2$} in the right hand side of the formula \eqref{eq:extremerank}.

In the sequel, we propose the following three methods for breaking the ties between the $p_j^\text{min}$ values:
\begin{itemize}
    \item The extreme rank length (ERL) approach takes into account the number of {pointwise $p$-values equalling the minimal $p$-value $p_j^{\min}$}, 
    \item the continuous rank (Cont) approach measures the {maximal} size of extremeness of $T_j(r)$ {that is associated with $p_j^\text{min}$}, and 
    \item the area rank (Area) {accumulates the extremeness of $T_j(r)$ across the spatial points $r$ where $p_j^{\min}$ is obtained}.
\end{itemize}

\subsubsection{Extreme rank length (ERL)}

The ERL measure refines the $p$-min approach in the sense that not only the minimal $p$-value $p^\text{min}_j$, but also the size of the domain where $p_j(r)$ are equal to $p^\text{min}_j$ is taken into account.
Typically, based on our experience, the ties are already broken by considering the counts of the smallest pointwise $p$-values among $r\in I$. However, to define the ERL measure \citep{MyllymakiEtal2017, NarisettiNair2016, MrkvickaEtal2018}  formally, the complete vectors of pointwise ordered $p$-values
$\mathbf{p}_j=(p_{j[1]}, p_{j[2]}, \dots , p_{j[n]})$, where
the pointwise $p$-values $p_j(r)$ observed at $r_1, \dots, r_n$ are arranged from smallest to largest such that 
$p_{j[k]} \leq p_{j[k^\prime]}$ whenever $k \leq k^\prime$, are considered and ordered by reverse lexical ordering:
The ERL measure is equal to
\begin{equation}
\label{eq:lexicalrank}
   e_j = \frac{1}{J+1}\sum_{j^\prime=0}^{{J}} {\mathbb I}(\mathbf{p}_{j^\prime} \prec \mathbf{p}_j),
\end{equation}
where 
\[
\mathbf{p}_{j^\prime} \prec \mathbf{p}_{j} \quad \Longleftrightarrow\quad
  \exists\, l\leq n: p_{{j^\prime}[k]} = p_{j[k]} \,\forall\, k < l,\  p_{{j^\prime}[l]} < p_{j[l]}
\]
and $J+1$ scales the values to the interval from 0 to 1.
Thus, the ERL measure uses in fact all the pointwise $p$-values to order the test statistics $T_j(r)$ from most extreme to the least extreme one, while the $p$-min procedure utilizes only $p_j^{\min} = p_{j[1]}$.

Since the probability of having a tie in the ERL measure is rather small, the application of Monte Carlo test on the ERL solves the ties problem \citep[see also][]{MyllymakiEtal2017}.
Thus the final $p$-value is $p^\text{erl}= \frac{1}{J+1} \sum_{j=0}^J \mathbb I (e_j \leq e_0)$.

Because often even the counts of the most extreme $p$-values break the ties, the measure can be efficiently implemented utilizing the counts of a certain number (e.g.\ six) of smallest pointwise $p$-values (see Section \ref{sec:computationalissues}).

\subsubsection{Continuous rank (Cont)}

Another refinement of $ p^\text{min}$ is the continuous rank measure 
\begin{equation}\label{eq:cont}
c_j = \frac{1}{J+1} \min_{r} C_j(r) \quad\text{with}\quad C_{j}(r) = J+1-c_{j}(r),
\end{equation}
where $c_{j}(r)$ is the continuous pointwise rank that is a refinement of the ordinary rank $R_j(r)$ based on the relative magnitude of the test statistic $T_{j}(r)$ with respect to the other $T_{j'}(r)$, $j'=0,\dots,J$. The continuous rank $c_{j}(r)$ is defined such that the smallest value of $T_0(r),\dots,T_{J}(r)$ obtains the smallest continuous rank:
Let $T_{[0]}(r) \le T_{[1]}(r) \le \dots \le T_{[J]}(r)$ denote the ordered set of values $T_{j}(r), i=0,1, \dots, J$. Then {the continuous rank of $T_{[j]}(r)$ is}
\begin{equation*}
 c_{[j]}(r) = j + \frac{T_{[j]}(r)-T_{[j-1]}(r)}{T_{[j+1]}(r)-T_{[j-1]}(r)}, \quad\text{for}\; j = 1, 2, \dots, J-1, 
\end{equation*}
and
\begin{align}\label{eq:extremecontranks}
 c_{[0]}(r) = \exp\left(-\frac{T_{[1]}(r)-T_{[0]}(r)}{T_{[J]}(r)-T_{[1]}(r)}\right),\, 
 c_{[J]}(r) =  J - \exp\left(-\frac{T_{[J]}(r)-T_{[J-1]}(r)}{T_{[J-1]}(r)-T_{[0]}(r)}\right).
\end{align}
If the probability to have ties among $T_{j}(r), j=0, \ldots J$, is zero, then the probability of ties among $c_j(r)$ is zero as well. If ties appear among $T_{j}(r), j=0, \ldots J$, i.e. $T_{[j-1]}(r) < T_{[j]}(r) = \dots = T_{[j^\prime]}(r) < T_{[j^\prime+1]}(r)$, then the continuous rank is defined as $c_{[l]}(r) = \frac{j+j^\prime}{2} + \frac{1}{2}$ for $l = j, \dots, j^\prime$. 
The continuous ranks were first suggested by \citet{Hahn2015}, and are defined here with a slight modification in the scaling of $c_{[0]}(r)$ and $c_{[J]}(r)$.

Finally, the univariate Monte Carlo test is performed based on $c_j$. Thus the final $p$-value is $p^\text{cont}= \frac{1}{J+1} \sum_{j=0}^J \mathbb I (c_j \leq c_0)$.

\subsubsection{Area rank}
The last refinement is the area rank measure $a_j$ defined as: 
\begin{equation}\label{aj}
a_j  = \frac{1}{J+1}\left( R_j - \frac{1}{n}\sum_{r} (R_j-{C_{j}(r)}) \1({C_{j}(r)} < R_j) \right)
\end{equation}
where $n$ is equal to the number of $r \in I$. 
{Thus, the area measure refines the extreme rank $R_i$ by reducing from it the scaled sum of the differences between $R_j$ and $C_j(r)$ across the spatial points $r$ where the minimal $p$-values (or ranks) are obtained. Thus, the larger the amount of those critical $r$ and the smaller the pointwise continuous ranks $C_j(r)$ at those critical $r$ are, the smaller is the value of the area measure, i.e.\ the more extreme the corresponding test statistic $T_j(r)$.}
The univariate Monte Carlo test is performed based on $a_j$ with $p^\text{area}= \frac{1}{J+1} \sum_{j=0}^J \mathbb I (a_j \leq a_0)$.

\subsection{Illustration of the {new} measures}\label{sec:measuresexample}

Figure \ref{fig:measures} shows a small set of one-dimensional functions (test statistics) $T(r)$ for description of the behaviour of the new measures.
Only large values of $T(r)$ are considered significant.
In this set of functions, the variance of the distribution of {$T(r)$ increases along $r$ (from left to right)}. 
According to the $F$-max method the most extreme test statistic is the function 1 reaching the overall highest value across the locations $r$. 
On the other hand, the $p$-min method gives the minimal $p$-value to six functions (1, 3, 4, 5, 7 and 9). 
The refinement measures decide the extremeness of these functions by different rules.

The most extreme function according to the ERL measure is the function 5 (yellow), because it obtains {the largest value, i.e.} the minimal $p$-value, over the largest domain.
On the other hand, according to the continuous and area rank measures, the most extreme function is the function 9 (blue): 
The deviation of the functions 3 (red) and 9 (blue) from the second most extreme functions is of the same size at those locations $r$ where these functions reach the minimal $p$-value. However, the scaling of the continuous rank by the overall variation (see Eq.\ \eqref{eq:extremecontranks}) leads to smaller pointwise continuous ranks for the function 9, therefore also to a smaller overall continuous rank.
Because the two functions reach the minimal $p$-values approximately at domains of the same size, also the area measure deduces the function 9 as the most extreme function.

This example illustrated two issues:
1) The fact that the ERL, continuous and area measures are all refinements of the $p$-min method allows them to regard as most extreme also functions which reach their minimal $p$-value, $p_j^\pmin$, at locations where the variation of the functions is small. 
2) The refinement measures break the ties by different rules as summarized in Table \ref{tab:inf}.

 \begin{figure}[ht!]
     \centering
     \includegraphics[width=\textwidth]{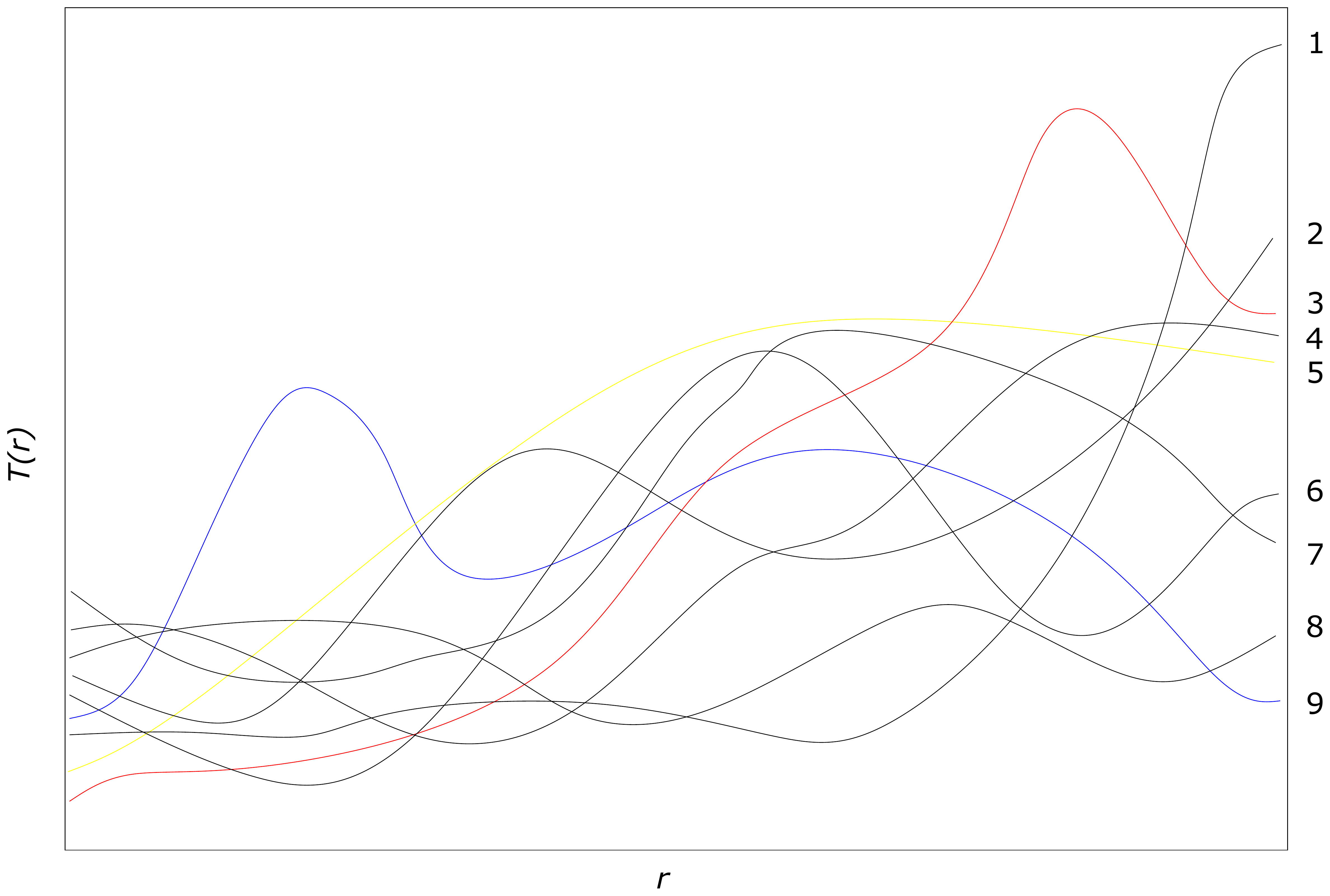}
\caption{A set of one-dimensional functions (test statistics) for description of the different measures. The variance of the distribution of the test statistics increases from left to right. The most extreme function according to ERL measure is the function 5, according to continuous and area rank measures it is the function 9. The function 3 is not as extreme according to these measures because the overall variation near its peak is higher than for the function 9.}
     \label{fig:measures}
 \end{figure}

\begin{table}[ht]
\centering
\begin{tabular}{cccc}
  \hline
& $p_j^\pmin$ & Domain with $p_j^\pmin$  & Value \\
  \hline
$F$-max  & No & No & Yes\\
$p$-min  & Yes & No & No  \\ 
 ERL   & Yes & Yes & No \\ 
 Cont    & Yes & No & Yes \\ 
 Area    & Yes & Yes & Yes \\ 
   \hline
\end{tabular} 
	\caption{Sources of information gathered by different measures. While $F$-max depends on the maximum value of the test statistic $T(r)$ across all $r\in I$, the other measures utilize the values of $T(r)$ at the locations $r$ with $p_j^\pmin$.
}\label{tab:inf}
\end{table}

\subsubsection{Graphical interpretation - $100\cdot (1-\alpha)$\% global envelope}

Consider any of the three measures we defined and denote them using a common notation $M_j, j=0, \ldots, J$. Let $I_\alpha = \{j\in 0,\dots, J: M_j \geq M_{(\alpha)} \}$ be the index set of {the test statistics}, where the threshold $M_{(\alpha)}$ is chosen to be the largest value in $\{M_0,\dots,M_{J}\}$ for which
\begin{equation}\label{eq:Ralpha}
  \sum_{j=0}^{J} \mathbb I \left(M_j < M_{(\alpha)}\right) \leq \alpha (J+1),
\end{equation}
{i.e.\ the set $I_\alpha$ contains the indices of the $100(1-\alpha)$\% of the least extreme test statistics $T_j(r)$. The global envelope is defined to be}
\begin{equation*}
  T_{\lo}^{(\alpha)}(r)= -\infty, \quad   T_{\up}^{(\alpha)}(r)= \max_{j\in I_\alpha}T_{j}(r).
\end{equation*}
As a consequence, the probability that $T_0(r)$ falls outside this global envelope in any of the $r$ points is less or equal to $\alpha$, 
\begin{equation*}
  \Pr\big(\exists\, r\in I: T_0(r)\notin[T_{\lo}^{(\alpha)}(r),T_{\up}^{(\alpha)}(r)])\leq \alpha.
\end{equation*}

The following theorem states that inference based on the $p$-value $p^M$ (where $p^M$ stands for $p^\text{erl}, p^\text{cont}$ and $p^\text{area}$) and on the global envelope specified by $T_{\lo}^{(\alpha)}(r)$ and $T_{\up}^{(\alpha)}(r)$ with respect to the appropriate measure are equivalent. Therefore, we can refer to these envelopes as the $100\cdot(1-\alpha)$\% global \emph{extreme rank length envelope}, \emph{continuous rank envelope} and \emph{area rank envelope}.
Figure \ref{fig:upenv} shows an example of the extreme rank length upper envelope together with red points indicating the area where the test function crossed this envelope. Thus the graphical interpretation identifies the voxels which cause the rejection.

\begin{theorem}\label{thm:envelope-vs-pinterval}
Assume that there are no pointwise ties with probability $1$ among $T_j(r), j=0,\ldots , J$. Then
\begin{enumerate}
 \item $T_0(r) >T_{\up}^{(\alpha)}(r)$ for some $r \in I$
iff $p^M \leq \alpha$, in which case the null hypothesis is rejected;
 \item $T_0(r) \leq T_{\up}^{(\alpha)}(r)$ for all $r \in I$ iff $p^M > \alpha$, and thus the null hypothesis is not rejected;
\end{enumerate}
\end{theorem}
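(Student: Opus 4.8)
The two assertions of Theorem~\ref{thm:envelope-vs-pinterval} are logical negations of one another: the event ``$T_0(r)\le T_{\up}^{(\alpha)}(r)$ for all $r$'' is the complement of ``$T_0(r)>T_{\up}^{(\alpha)}(r)$ for some $r$'', and likewise $p^M>\alpha$ is the complement of $p^M\le\alpha$. Hence it suffices to prove assertion~1, after which assertion~2 and the accompanying rejection statements are immediate (the decision rule being to reject exactly when $p^M\le\alpha$). The plan is to establish assertion~1 by routing it through the index set $I_\alpha$, i.e.\ to prove the chain of equivalences
\[
  p^M\le\alpha\ \Longleftrightarrow\ 0\notin I_\alpha\ \Longleftrightarrow\ \exists\, r\in I:\ T_0(r)>T_{\up}^{(\alpha)}(r).
\]

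The first equivalence is a counting argument straight from the definitions. Since $M_{(\alpha)}\in\{M_0,\dots,M_J\}$ and $I_\alpha=\{j:M_j\ge M_{(\alpha)}\}$, we have $M_{(\alpha)}=\min_{j\in I_\alpha}M_j$, so $0\notin I_\alpha$ is the same as $M_0<M_{(\alpha)}$. If $M_0<M_{(\alpha)}$ then $\{j:M_j\le M_0\}\subseteq\{j:M_j<M_{(\alpha)}\}$, and by \eqref{eq:Ralpha} the left-hand count is $\le\alpha(J+1)$, that is $p^M\le\alpha$. Conversely, if $p^M\le\alpha$ then $\#\{j:M_j<M_0\}\le\#\{j:M_j\le M_0\}\le\alpha(J+1)$, so $M_0$ itself satisfies the inequality defining $M_{(\alpha)}$ and hence $M_{(\alpha)}\ge M_0$; moreover, for $\alpha<1$ not every $M_j$ can be $\le M_0$ (else $p^M=1$), so the smallest value strictly larger than $M_0$ also satisfies \eqref{eq:Ralpha}, forcing $M_{(\alpha)}>M_0$, i.e.\ $0\notin I_\alpha$.

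For the second equivalence, the direction ``$\Leftarrow$'' is trivial: if $0\in I_\alpha$ then $T_0(r)\le\max_{j\in I_\alpha}T_j(r)=T_{\up}^{(\alpha)}(r)$ for every $r$, so any strict exceedance forces $0\notin I_\alpha$. For ``$\Rightarrow$'', assume $0\notin I_\alpha$, so $M_0<M_j$ for all $j\in I_\alpha$. Each of the three refinements is built from the pointwise ranks and is order-consistent with the extreme rank $R_j$ of \eqref{eq:extremerank}, so $M_0<M_j$ implies $R_0\le R_j$, equivalently $\max_r R_j(r)\le\max_r R_0(r)=:\rho_0$, for every $j\in I_\alpha$. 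Pick $r^\ast$ with $R_0(r^\ast)=\rho_0$. For each $j\in I_\alpha$ (necessarily $j\ne 0$) we then have $R_j(r^\ast)\le\max_r R_j(r)\le\rho_0=R_0(r^\ast)$; by the no-pointwise-ties hypothesis this inequality is strict, i.e.\ $T_j(r^\ast)<T_0(r^\ast)$. Hence $T_0(r^\ast)>\max_{j\in I_\alpha}T_j(r^\ast)=T_{\up}^{(\alpha)}(r^\ast)$, and concatenating the two equivalences proves assertion~1.

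The step I expect to demand the most care is the claim used just above, namely that each of the three measures orders the functions consistently with the extreme rank, so that a globally more extreme function attains a strictly larger pointwise rank --- and therefore a strict pointwise exceedance --- at a single location $r^\ast$. For the ERL measure this is transparent, since a strictly smaller minimal pointwise $p$-value $p_{j[1]}$ already yields $\mathbf p_j\prec\mathbf p_0$ and hence $e_j<e_0$. For the area measure it follows from the elementary bounds $\tfrac{R_j-1}{J+1}<a_j\le\tfrac{R_j}{J+1}$ together with the integrality of $R_j$. For the continuous rank one must additionally track the fractional parts of the pointwise continuous ranks and, in particular, the non-standard scaling of the extreme pointwise continuous ranks $c_{[0]}(r)$ and $c_{[J]}(r)$ in \eqref{eq:extremecontranks}; this is the case I would verify most carefully. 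Apart from that point, the whole argument reduces to the counting identity of the first equivalence and the single rank observation $\max_j R_j(r^\ast)=\rho_0$ at the witnessing location.
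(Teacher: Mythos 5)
Your overall route coincides with the paper's: reduce assertion 2 to assertion 1 and run the chain $p^M\le\alpha \Leftrightarrow M_0<M_{(\alpha)} \Leftrightarrow 0\notin I_\alpha \Leftrightarrow \exists\, r\in I:\ T_0(r)>T_{\up}^{(\alpha)}(r)$. Your counting argument for the first equivalence is correct (and more careful about ties among the $M_j$ than the paper's one-line version), the direction ``exceedance $\Rightarrow 0\notin I_\alpha$'' is trivially right, and your order-consistency claim is verifiable for ERL and for Area: a strictly smaller $p_{j[1]}$ forces $\mathbf{p}_j\prec\mathbf{p}_0$ and hence $e_j<e_0$, and for Area the bounds $(R_j-1)/(J+1)<a_j\le R_j/(J+1)$ do hold even with the scaling \eqref{eq:extremecontranks}, because $C_j(r)>J+1-R_j(r)\ge R_j-1$ at every $r$, so a strictly smaller $a$ forces a weakly smaller integer extreme rank.

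The genuine gap is exactly the step you flagged and postponed: the Cont case, which the paper's proof also passes over with ``according to the definition of the measure $M$''. Under the printed formula \eqref{eq:extremecontranks}, $c_{[J]}(r)=J-\exp(\cdot)$ lies in $(J-1,J)$, the same interval as $c_{[J-1]}(r)$, and can be the smaller of the two (pointwise values $0,1,10,10.1$ give $c_{[J]}\approx 2.01<c_{[J-1]}\approx 2.99$); that is, the pointwise continuous rank is not monotone in $T_j(r)$ at the top. Consequently a function that is nowhere the pointwise maximum ($R_0=2$) can have a strictly smaller global measure $c_0$ than the function attaining the maximum ($R_j=1$), so $c_0<c_j$ does not imply $R_0\le R_j$, and your witnessing location chosen to attain $\max_r R_0(r)$ need not give a strict exceedance. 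The natural repair is to witness instead at a location attaining $\min_r C_0(r)$, where $C_j(r^\ast)\ge (J+1)c_j>(J+1)c_0=C_0(r^\ast)$ for all $j\in I_\alpha$; but concluding $T_j(r^\ast)<T_0(r^\ast)$ then requires $T\mapsto c(r)$ to be strictly increasing at fixed $r$, which again fails for the printed $c_{[J]}(r)$ — indeed in the configuration above the data function can have the uniquely smallest $c_0$ yet stay below $T_{\up}^{(\alpha)}(r)$ everywhere, so for Cont the equivalence itself needs the extreme continuous ranks to be scaled monotonically (the maximum's rank placed above $J$ rather than in $(J-1,J)$). So your argument closes the theorem for ERL and Area, but the Cont case cannot be completed as written; this is precisely the point on which the paper's own proof is silent as well.
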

\begin{proof}
Since 1.\ holds iff 2.\ holds, it's enough to show 1.
According to the definition of $p^M$, $p^M \leq \alpha$ iff number of  $M_j$ smaller or equal to $M_0$ is smaller or equal to $\alpha (J+1)$. That is equivalent, according to the definition of $M_{(\alpha)}$ to $M_0 < M_{(\alpha)}$. This holds iff $0 \notin I_\alpha$, which is equivalent to  $T_0(r) >T_{\up}^{(\alpha)}(r)$ for some $r \in I$ according to the definition of the measure $M$.

 \end{proof}

\subsection{$F$-max approach }
Usually the statistic $T(r)$ is one of the statistics used in the parametric general linear models, i.e.\ $t$-statistic, $F$-statistic, or variance weighted $F$-statistic. In these cases, an immediate solution to the problem of ties is to replace $p_j^\text{min}$ statistic in the Monte Carlo test directly by the $F_j^\text{max}=\max_{r \in I} T_j(r)$ statistic, but with the price of loosing the pivotality and, consequently also the power. See the next section for details.

\section{Simulation experiment}
\label{se:se}
To compare the power and robustness of the proposed methods and the existing multiple comparison methods under different scenarios, we generated synthetic imaging data mimicking real data from neuroimaging studies. 
We considered a categorical factor $g$ taking the values 1 or 2 according the group to which the image belongs to,
and a continuous factor $z$ that was generated from the uniform distribution on $(0, 1)$. 
We simulated images $Y(r)$ in the square window $[-1, 1]^2$ on a grid of $51 \times 51$ pixel resolution from the following GLM models: 
$$
\text{M0}:\quad Y(r) = \epsilon(r)
$$
\begin{eqnarray*}
&& \text{M1}: \quad Y(r) = \exp(-10\cdot ||r||) \cdot g + \epsilon(r) \\
&& \text{M1}': \quad Y(r) = \exp(-200\cdot ||r||) \cdot g + \epsilon(r)
\end{eqnarray*}
and
$$
\text{M2}:\quad Y(r) = \exp(-10\cdot ||r||) \cdot (g + z) + \epsilon(r).
$$
The image resolution is obviously small in comparison the brain data (see Section \ref{sec:introexample}), but was regarded as sufficient for the designed experiment to show differences between the measures.
Here $||r||$ denotes the Euclidean distance of the pixel to the origin and $\epsilon(r)$ is a zero-mean correlated error. The model M0 has no factors and generates purely noisy images, the models M1 and M1' generate two groups of images depending on the categorical factor $g$ and the model M2 generates images which depend on both the categorical and continuous factors. 
The models M1 and M1' correspond to simple comparison of two groups of images, where the "bump" at the center of the image is two times higher for second group than for the first group. 
The area where the two groups differ is about hundred times smaller in the model M1' than in M1. 
Because in M1 the departures from the null model occur for many pixels $r$ and the ERL and Area methods accumulate information across the pixels,
the ERL and Area methods are expected to have advantage over the other methods in this model. On the other hand, the departures from the null model are expected only for a few pixels $r$ in M1' and, thus, such advantage does not occur in this model.

The model M0 is a null model where the images in the two groups are from the same model (no factors) and it was used for estimating the significance levels of the different tests. The model M2 is similar to M1, but the groups are disturbed by the continuous factor $z$. The models {M1, M1' and M2} were used for power estimation. The permutation of raw data was used in models M1 and M1'. The Freedman and Lane permutation scheme was used in M2. Because the model M2 revealed consistent results with model M1, the results are not presented.

For both models M1 and M1', seven different correlated error structures $\epsilon(r)$ were considered:
\begin{itemize}
\item[(a)] Gaussian error $\epsilon(r) = G_{0.15}(r)$,  where $G_{\rho}(r)$ is a Gaussian random field with the exponential correlation structure with scale parameter $\rho$ and standard deviation $\sigma$ which will take several values,
\item[(b)] skewed error $\epsilon(r) = \exp(G_{0.15}(r))$, 
\item[(c)] shape inhomogeneous distribution with increasing bimodality \\ 
$\epsilon(r) = \frac{1}{4}\text{sign}(G_{0.15}(r))|G_{0.15}(r)|^{1/(2\|r\|+1)}$, 
\item[(d)] 
$
\epsilon(r) = \mathbb I (\|r\| > 0.5) \frac{1}{2}{G_{0.15}(r)^{1/5}} + \mathbb I (\|r\| \leq 0.5) G_{0.15}(r)
$,
i.e.\ inhomogeneous distribution over $I$ with the bimodal and normal errors in the periphery and in the middle of the image, respectively,
\item[(e)] 
$
\epsilon(r) = \frac{1}{8}\mathbb I (\|r\| \leq 0.5) \exp(3G_{0.15}(r)) + \frac{1}{8}\mathbb I (\|r\| > 0.5) (G_{0.15}(r)+1) 
$,
i.e.\ inhomogeneous distribution over $I$ with the skewed and normal errors in the middle and periphery of the image, respectively,
\item[(f)]
$
\epsilon(r) = \mathbb I (\|r\| \leq 0.5) G_{0.05}(r) + \mathbb I (\|r\| > 0.5) G_{0.3}(r)
$,
i.e.\ Gaussian distribution with inhomogeneity in the correlation structure (scale parameters 0.05 and 0.3 in the middle and periphery of the image), 
\item[(g)]
$
\epsilon(r) = \mathbb I (\|r\| \leq 0.5)\frac{1}{2} G_{0.05}(r)^{1/5} + \mathbb I (\|r\| > 0.5) \frac{1}{2}G_{0.3}(r)^{1/5}
$,
i.e.\ bimodal distribution with inhomogeneity in the correlation structure.
\end{itemize}
The homogeneous error distributions (a) and (b) represent cases where all methods should perform well in identifying alternative hypothesis.
The skewed error (b) represents a situation where the permutation inference is necessary because the assumptions needed for parametric methods are not met. 
Further, the errors (c), (d) and (e) are inhomogeneous across the image and illustrate cases where the variability of the test statistic $T(r)$ in the periphery of the image mask the signal from the center of the image. The error (c) is rather normal in the center of the image while the increasing bimodality appears closer to the periphery. The bimodal error corresponds to the sharp changes in the images, whereas Gaussian error corresponds to the smooth changes in the images. The errors (d) and (e) have normal error at different places, in order to have bigger quantiles of the test statistic at the periphery of the image.  
On the other hand, the errors (f) and (g) are used to investigate the effect of inhomogeneity in the correlation structure on the results.

All the images in the simulation study had the resolution $51 \times 51$ pixels.
The number of permutations used throughout the whole simulation study was set to 2000.
The estimated significance levels and powers were recorded for various cases for all studied multiple comparison methods, i.e.\ for $F$-max, $p$-min and the proposed ERL, continuous and area rank methods.

To capture the behavior of the methods for various levels of significance, seven different standard deviations {$\sigma_1 = 0.1$, $\sigma_2 = 0.2$, $\sigma_3 = 0.3$, $\sigma_4 = 0.5$, $\sigma_5 = 0.75$, $\sigma_6 = 1$ and $\sigma_7 = 1.25$} were used in all studied cases. Each model was simulated with ten images per group and each experiment was repeated 1000 times in order to obtain estimated significance levels and powers. 

Tables \ref{M0}, \ref{M1} and \ref{M1'} show the results for models M0, M1 and M1' in a shorted way. The detailed results are presented in Appendix C. 

The estimated significance levels revealed the same structure for all errors (a)-(g) (Table \ref{M0}): The $p$-min was enourmously conservative. The $F$-max and all our proposed methods achieved the preset significance level $\alpha=0.05$ in all cases. (The 95\% confidence interval for 1000 simulations with success probability 0.05 is (0.037, 0.063).) The only exception was the error (c), where Cont was slightly conservative. 

The conservative $p$-min method had no power (Tables \ref{M1} and \ref{M1'}).
The ERL and Area methods had uniformly higher power than $F$-max for all errors in case of the model M1. This is caused by the fact that ERL and Area summarize the extremeness of the test statistic from all spatial points. This advantage led to higher power even in the cases {(f)-(g) with inhomogeneous autocorrelation}.

The situation is more complicated for model M1', where  the area of extremeness is rather small: the ERL and Area can not benefit from their feature of collecting information from all spatial points. For the homogeneous cases with normal error (a) and skewed error (b), the  $F$-max is slightly more powerful than our proposed methods. When the inhomogeneity of errors is set in such a way that the {values of the test statistic quantiles are larger} at periphery where no differences are present -- cases (c), (d) and (e) -- the ERL and Area methods are significantly more powerful than $F$-max (Table \ref{M1'}). Finally, when {the range of correlation is larger} at the periphery where no differences are present, the methods accumulating information from neighborhood should be affected by this fact. Really, in case (f) ERL appears to be less powerful than $F$-max, but the Area method which uses {different} sources of information {(see Table \ref{tab:inf})} is affected by this fact only little. This decrease of power is not further seen in case (g) where the normal distribution is replaced by the bimodal distribution. At last, we observe that Cont and $F$-max methods were rather equivalent in our study with 2601 spatial points and 2000 permutations.

\begin{table}[ht]
\centering
\begin{tabular}{rrrrrrrr}
\hline
& M0a-$\sigma_ 1$ & M0b-$\sigma_ 1$ & M0c-$\sigma_ 1$ & M0d-$\sigma_ 1$ & M0e-$\sigma_ 1$ & M0f-$\sigma_ 1$ & M0g-$\sigma_1$ \\
\hline
F-max   & 0.054 & 
0.056 & 0.040 & 0.046 & 0.053 & 0.058 & 0.043 \\ 
$p$-min   & 0.000 & 0.000 & 0.000 & 0.000 &0.000& 0.000 & 0.000 \\ 
ERL   & 0.060 & 0.052 & 0.045 & 0.043 & 0.047& 0.057 & 0.045  \\ 
Cont   & 0.058 & 0.056 & 0.031 & 0.043 & 0.045 & 0.049 & 0.053 \\ 
Area   & 0.058 & 0.047 & 0.040 & 0.040 & 0.043& 0.061 & 0.045  \\
\hline
\end{tabular}
\caption{\label{M0}Empirical significance levels of all studied methods {based on 1000 replicates} for model M0 with error (a)-(g) with standard deviation $\sigma_1=0.1$. }
\end{table}

\begin{table}[ht]
\centering
\begin{tabular}{rrrrrrrr}
\hline
&M1a-$\sigma_ 5$ & M1b-$\sigma_ 5$ & M1c-$\sigma_ 4$ & M1d-$\sigma_ 4$ & M1e-$\sigma_ 7$ & M1f-$\sigma_ 5$ & M1g-$\sigma_ 6$ \\
\hline
F-max    & 0.395 & 0.464 & 0.267 & 0.334 & 0.294 & 0.632 & 0.421 \\ 
$p$-min    & 0.000 & 0.000 & 0.000 & 0.000 & 0.000 & 0.000 & 0.000 \\ 
ERL    & 0.606 & 0.594 & 0.948 & 0.943 & 0.751 & 0.776 & 0.984 \\ 
Cont   & 0.378 & 0.399 & 0.228 & 0.363 & 0.335 & 0.522 & 0.358 \\ 
Area  & 0.544 & 0.568 & 0.879 & 0.909 & 0.577 & 0.749 & 0.979 \\
\hline
\end{tabular}
\caption{\label{M1}Empirical powers of all studied methods {based on 1000 replicates} for model M1 with error (a)-(f). The chosen standard deviation $\sigma$ for each case is the one with maximal contrast between methods.}
\end{table}

\begin{table}[ht]
\centering
\begin{tabular}{rrrrrrrr}
\hline
& M1'a-$\sigma_ 3$ & M1'b-$\sigma_ 2$ & M1'c-$\sigma_ 7$ & M1'd-$\sigma_ 3$ & M1'e-$\sigma_ 4$ & M1'f-$\sigma_ 3$ & M1'g-$\sigma_ 1$ \\
\hline
F-max   & 0.917 & 0.562 & 0.377 & 0.549 & 0.190 & 0.963 & 0.412 \\ 
$p$-min   & 0.000 & 0.000 & 0.000 & 0.000 & 0.000 & 0.000 & 0.000 \\ 
ERL   & 0.732 & 0.365 & 0.972 & 0.752 & 0.235 & 0.720 & 0.894 \\ 
Cont   & 0.799 & 0.461 & 0.339 & 0.445 & 0.221 & 0.899 & 0.408 \\ 
Area   & 0.825 & 0.500 & 0.915 & 0.800 & 0.259 & 0.924 & 0.924  \\
\hline
\end{tabular}
\caption{\label{M1'}Empirical powers of all studied methods {based on 1000 replicates} for model M1' with error (a)-(f). The chosen standard deviation $\sigma$ for each case is the one with maximal contrast between methods. }
\end{table}

\section{Computational details: how to implement the methods for large brain data}\label{sec:computationalissues}

The $p$-min, ERL, Cont and Area measures are based either on ordinary or continuous pointwise ranks of the test statistics $T_j(r)$, $J=0,1,\dots,J$, among each other. A naive algorithm would first calculate all the ranks and save them in the memory in order to calculate the measures thereafter. However, the space complexity of the naive algorithm is $O(Jn)$, and for large data all the ranks may not fit comfortably to the memory of an ordinary computer. 
However, for all the proposed measures, it is possible to split the task by the locations (voxels) $r\in I$. In this case, the space complexity is just $O(Js)$, where $s$ is the number of subjects.
We note that the space complexity can be even further reduced to $O(J)$ if the permutations are not saved. This can be achieved by resetting the seed of the pseudo random number generator for each split. This is typically not necessary.

There is some extra computational complexity of the rank based multiple comparison in comparison to the $F$-max method because the test statistics need to be ranked. 
Namely, letting $S$ denote the time complexity of the test statistic to be computed, 
the complexity of computing rank based multiple comparison correction is $O(JnS + Jn\log J)$, while for $F$-max it is $O(JnS)$. Thus, asymptotically the rank based corrections are slower than $F$-max. This is the same whether the naive or location-wise calculations are used.

Algorithm \ref{alg:computation} presents the algorithm for location-wise (voxel by voxel) calculation of the rank-based measures. 
The update rules and calculation of the final measures are described thereafter for the different measures in Sections \ref{sec:update_pminCont}, \ref{sec:update_Area} and \ref{sec:update_ERL}.
The implementation is obvious for the $p$-min and Cont measures, while for ERL and Area, some more complicated calculations are needed. 
The computations are implemented in the function partial\_forder of the R package GET \citep{MyllymakiMrkvicka2020}.
{Section \ref{sec:compenvelope} explains how the global envelope can be calculated given the measures.}

\begin{algorithm}\label{alg:computation}
\caption{An algorithm to compute functional ordering voxel by voxel. Below {\em ranks} refers to the pointwise ranks, either ordinary (for $p$-min and ERL) or continuous (for Cont and Area). The {\em update} rules as well as final calculations are described in Section \ref{sec:update_pminCont} for $p$-min and Cont, in Section \ref{sec:update_Area} for Area and in Section \ref{sec:update_ERL} for ERL.}
\begin{algorithmic}
 \State Initialize the (augmented) measure $M_j$ for $j=0,\dots,J$ 
 \State Generate $J$ permutations
 \For{each voxel}
  \State $T_0 \gets $ ($F$-)statistic for data
  \For{$j \gets 1,\dots,J$}
   \State $T_j \gets $ ($F$-)statistic for permutation $j$
  \EndFor
  \State $(m_0, m_1, \dots, m_J) \gets ranks(T_0, T_1\dots, T_J)$
  \For{$j \gets 0,\dots,J$}
   \State $M_j \gets update(M_j, m_j)$
  \EndFor
 \EndFor
 \State Compute the final measures from the augmented measures
\end{algorithmic}
\end{algorithm}

\subsection{Update rule for $p$-min and Cont}\label{sec:update_pminCont}

For the $p$-min and Cont measures, the update rule is simple:
\begin{equation*}
M_j \gets \min(M_j, m_j).
\end{equation*}
When all locations $r\in I$ have been handled, the final measure, either $p^\pmin_j$ or $c_j$, is obtained by dividing the $M_j$ by $J+1$.

\subsection{Update rule for Area}\label{sec:update_Area}

For the Area and ERL measures, it is necessary to augment the measure with some auxiliary information during the computation.
Namely, for the Area measure, the extreme rank $R_j$ and the difference between the extreme rank $R_j$ and the pointwise continuous rank $C_j(r)$ has to be saved and updated for data and each permutation $j$. Let $D_j$ denote the difference.
Initially $M_j = (R_j, D_j) = (\infty, 0)$.
For the update there are three possibilities:
\begin{equation*}
    M_j = (R_j, D_j) \gets \begin{cases}
     (R_j, D_j) & \text{ if } ceil(m_i) > R_i\\
     (R_j, D_j + ceil(m_j) - m_j) & \text{ if } ceil(m_j) = R_j\\
     (ceil(m_j), ceil(m_j) - m_j) & \text{ if } ceil(m_j) < R_j
    \end{cases}
\end{equation*}
The final measure is 
\begin{equation*}
a_j = \frac{1}{J+1}(R_j - D_j/n).
\end{equation*}

\subsection{Update rules for ERL}\label{sec:update_ERL}

For ERL, the augmented measure $M_j$ is a vector of the chosen number of most extreme pointwise ranks of $T_j(r)$ and their counts among $r\in I$. 
We record the six most extreme ranks, thus let
$$
M_j = (R_1,R_2,\dots, R_6, c_1, c_2, \dots, c_6) = (\infty, \dots, \infty, 0, \dots, 0)
$$
be the initilized vector. Then the updates are obtained by the following rules:
\begin{itemize}
\item If $m_j = R_i$ for some $i = 1, \dots, 6$,
then $c_i \gets c_i + 1$.
\item If $R_{i-1} < m_j < R_i$ for some $i=1,\dots,6$, then 
$$R_{k} \gets R_{k-1}, c_k \gets c_{k-1} \text{ for } k = i+1, \dots, 6$$ and $R_i \gets m_j, c_i \gets 1$.
\end{itemize}
The final measure for ERL is the ranking of the augmented measures $M_j$ according to a special ordering that is defined as follows. Let $R_i, c_i$ and $R_i', c_i'$ be the augmented measures for two functions 1 and 2.
\begin{enumerate}
    \item Start with the most extreme rank $i \gets 1$
    \item if $R_i < R_i'$ then function 1 is more extreme.
    \item if $R_i > R_i'$ then function 2 is more extreme.
    \item if $R_i = R_i'$ and $c_i < c_i'$ then function 2 is more extreme.
    \item if $R_i = R_i'$ and $c_i > c_i'$ then function 1 is more extreme.
    \item if $R_i = R_i'$ and $c_i = c_i'$ then set $i \gets i + 1$ and go to step 2.
\end{enumerate}
The final normalized ERL measures \eqref{eq:lexicalrank} are obtained by dividing by $J+1$.

\subsection{Computation of the global envelopes}\label{sec:compenvelope}

After the final measures have been calculated and the $p$-value of the test is less than the chosen significance level, it is possible to construct the envelope to detect which parts of the domain have led to the rejection of the null hypothesis.
Algorithm \ref{alg:envelopes} describes the computation. 

\begin{algorithm}\label{alg:envelopes}
\caption{An algorithm to compute global envelopes after the functional orderings have been calculated.}
\begin{algorithmic}
 \State Let $M_j$ be the {final} measure for $j=0,\dots,J$ and $M^{(\alpha)}$ the critical value{, i.e.\ the largest of the $M_i$ such that the number of those $j$ for which $M_j< M^{(\alpha)}$ is less or equal to $\alpha (J+1)$}
 \State Use the same $J$ permutations as when computing the functional orderings
 \For{each voxel $v$}
  \State $T_0 \gets $ ($F$-)statistic for data
  \For{$j \gets 1,\dots,J$}
   \State $T_j \gets $ ($F$-)statistic for permutation $j$
  \EndFor
  \State $U_v \gets \max\{T_j : M_j > M^{(\alpha)}\}$
 \EndFor
 \State {$U_v$ contains the upper envelope.}
\end{algorithmic}
\end{algorithm}

\section{Discussion}
All the permutation based multiple testing procedures reach the prescribed significance level, if the permutation strategy leads to exchangeable test statistics. They even reach the prescribed level approximately in the presence of nuisance factors, if the Freedman and Lane permutation strategy is used. This fact was demonstrated here by a simulation study and the \citet{EklundEtal2016} type of experiment presented in Appendix A. 

The measures used in the permutation test are different in two ways, their sensitiveness towards different types of extremeness and their robustness against different types of inhomogeneities of the error term $\epsilon(r)$ across $r\in I$. 
The sensitivities can be understood from the sources of information which are gathered by the different measures (see Table \ref{tab:inf}): 
The integral type of extremeness is detected by those measures (ERL, Area) that gather information across the locations $r\in I$ with the minimal $p$-values, while the measures based on the value of extremes (Cont, Area) are sensitive to maximum type of extremeness.
The categorization is given by the construction of the measures, and it was further supported by the simulation study results for cases M1 and M1'.

Table \ref{tab:problem} summarizes the investigated measures and their problems with identification of the alternative hypothesis, i.e. with ties and various inhomogeneities of the error term. The inhomogeneities can decrease the power of the procedures as stated in the table.
As it was stated already in the introduction, the inhomogeneity and non normality of error term $\epsilon(r)$ across $r\in I$ together lead to inhomogeneous quantiles of the $F$-statistic. 
Therefore, the $F$-max procedure can be blind to the departures where the quantiles of $T(r)$ are less variable. This was demonstrated in Section \ref{sec:introexample} and in the simulation study by cases c), d) and e). Since the Cont measure depends also on the values of $T(r)$ and not only ranks, the Cont measure has the same problem as $F$-max with inhomogeneity of $T(r)$. However, the problem of Cont depends on the number of ties to break: the bigger their number is, the less sensitive Cont is expected to be to inhomogeneity.

On the other hand, {in the presence of the inhomogeneity of the correlation structure of $\epsilon(r)$ across $r\in I$, the ERL measure can be blind to the departures from the null that occur in areas where the range of correlation of $\epsilon(r)$ is smaller.}  This was demonstrated by the simulation study cases f) and g).  

The Area rank measure is a combination of Cont and ERL, and thus it should be sensitive to both inhomogeneities but as the simulation study shows the decrease of power in cases c), d), e), f) and g) was not as apparent as for the $F$-max, ERL or Cont procedures. Thus the Area measure appears to be the most robust method with respect to inhomogeneity of the error term $\epsilon(r)$.

Further, the simulation study case g) showed that the rank based measures are much more robust with respect to extreme non normality than the $F$-max test as the decrease of power between cases f) and g) is apparent only for $F$-max procedure.

\begin{table}[ht]
\centering
\begin{tabular}{cccc}
  \hline
  & Ties & Inhomogeneity and  & Inhomogeneity of correlation  \\ 
  &&non-normality of $\epsilon(r)$&structure of $\epsilon(r)$\\
  \hline
$F$-max  & No & Yes & No\\
$p$-min  & Yes & No & No  \\ 
 ERL   & No & No & Yes \\ 
 Cont    & No & Yes & No \\ 
 Area    & No & Partial & Partial \\ 

  \hline
\end{tabular} 
	\caption{Problems of different measures.}\label{tab:problem}
\end{table}

\section{Conclusions}

We presented three new multiple comparison methods for the permutation inference for the GLM.
We showed by a designed experiment that 
the proposed methods have the desired significance level unlike the $p$-min method that was highly conservative, unless enormous amount of permutations was used.
The choice of the method from $F$-max, ERL, Cont and Area rank measures should depend on the assumptions about error term $\epsilon(r)$ that can be made and the expected type of extremeness of the test statistic.
{Both homogeneity of the distribution of $\epsilon(r)$ (HD) and homogeneity of the correlation structure of $\epsilon(r)$ (HCS) across the image play a role (see Table \ref{tab:problem}). When we assume
\begin{itemize}
    \item both HD and HCS, then all methods can be applied. The ERL or Area methods should be preferred when the test statistic is expected to be extreme over a large area of the image $I$; $F$-max, Cont or Area when extremeness is expected only on a small area. The same recommendations hold, if the assumption of homogeneity of $\epsilon(r)$ is replaced by the assumption of the normality of the error $\epsilon(r)$.
    \item HCS but not HD, then only ERL can be applied without worry of loosing power to identify alternative hypotheses. However, in our designed experiment, Area showed good robustness with respect to inhomogeneity of the distribution of $\epsilon(r)$, whereas $F$-max and Cont did not.
    \item HD but not HCS, then $F$-max and Cont can be applied without worry of loosing power to identify alternative hypotheses. Again Area showed good robustness with respect to inhomogeneity of the correlation structure of $\epsilon(r)$, whereas ERL did not.
    \item neither HD nor HCS, then none of the methods can be applied without worry of loosing power to identify alternative hypotheses, but Area showed good robustness with respect to both types of inhomogeneities.
\end{itemize}}

Thus, our experience suggests that the Area method can be used in a general situation without worry of loosing much power. Its dual nature of being based both on extremeness of the test statistic as well as on summarizing this extremeness across spatial points makes it sensitive to different departures from the null model and robust to different kinds of inhomogeneity. 
{This was demonstrated by a designed experiment as well as an example of a real data study, where the Area method detected significant differences between groups while $F$-max did not. 
From these reasons, we believe that the ERL and Area methods can help to find further understanding of the phenomena and brain function and, thus, prove to be useful for the neuroimaging practice.}

\section*{Acknowledgements}
T.\ Mrkvi\v cka has been financially supported by the Grant Agency of Czech Republic (Project No. 19-04412S), and M.\ Myllym{\"a}ki and M.\ Kuronen by the Academy of Finland (project numbers 295100, 306875, 327211). 
The authors wish to acknowledge CSC -- IT Center for Science, Finland, for computational resources.

\section*{Data availability}
We confirm that the used data of autism brain imaging from ABIDE are freely available at http://fcon\_1000.projects.nitrc.org/indi/abide/abide\_I.html.

\section*{References}


\section*{Appendix A: Study on false discoveries}

To study the significant levels of the ERL, Cont and Area tests by an experiment similar to those proposed by \citet{EklundEtal2016}, we used the measurements of the 573 typical controls of the autism brain imaging data of Section \ref{sec:introexample}. For this experiment, we chose the left and right Crus Cerebelum 1 regions of the brain containing 4783 voxels. 
We took 1000 samples of 20 control subjects and each sample was randomly split in two equally sized groups.
For each sample, we tested the effect of the group in the model \eqref{eq:GLM} where age was a nuisance factor, using 2499 permutations by the Freedman-Lane algorithm \citep{FreedmanLane1983}. 
This number of permutations was chosen to have the number of permutations about half of the number of voxels as in the illustrative example of Section \ref{sec:introexample}. 
All the empirical significance levels were $0.040$.

\section*{Appendix B: The global envelope for the brain data example}

Figure \ref{fig:ABIDEArea2} shows the 95\% global Area envelope and the significant voxels for the whole brain data for the example discussed in Section \ref{sec:introexample}.
The number of simulations was 100000 and $p$-value 0.040 for the test of the effect of the group.

\begin{figure}
    \centering
    \includegraphics[width=\textwidth]{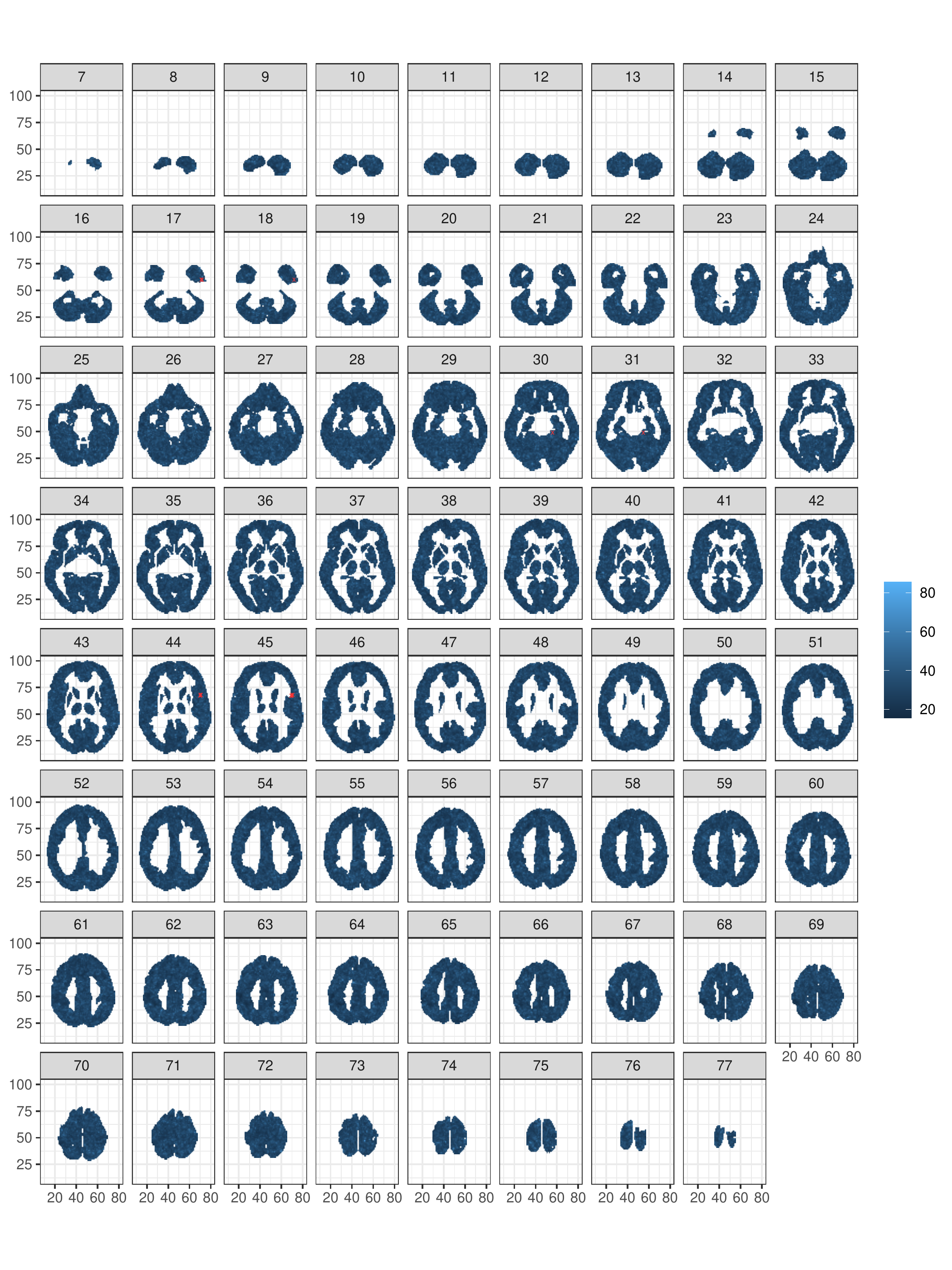}
    \caption{The upper 95\% envelope of the voxelwise $F$-statistic of the studied model where the type (ASD or TC) was the interesting factor (blue image) together with the significant voxels detected by the area rank test (red crosses). The brain is shown by 71 2D slices (with depths varying from 7 to 77).}
    \label{fig:ABIDEArea2}
\end{figure}

\section*{Appendix C: Further results for the simulation study}

The following tables contain all results of the simulation study for M1 and M1'.

\begin{table}[ht]
\centering
\begin{tabular}{rrrrrrrr}
  \hline
 & $\sigma_ 1$ & $\sigma_ 2$ & $\sigma_ 3$ & $\sigma_ 4$ & $\sigma_ 5$ & $\sigma_ 6$ & $\sigma_ 7$ \\
  \hline
F-max   & 0.054 & 0.048 & 0.043 & 0.051 & 0.044 & 0.045 & 0.047 \\ 
  $p$-min   & 0.000 & 0.000 & 0.000 & 0.000 & 0.000 & 0.000 & 0.000 \\ 
  ERL   & 0.060 & 0.036 & 0.042 & 0.048 & 0.052 & 0.046 & 0.040 \\ 
  Cont   & 0.058 & 0.060 & 0.047 & 0.042 & 0.052 & 0.055 & 0.047 \\ 
  Area   & 0.058 & 0.040 & 0.041 & 0.053 & 0.047 & 0.050 & 0.040 \\ 
   \hline
\end{tabular}
\caption{\label{M0a}Estimated significance levels of all studied methods and 7 standard deviations for model M0 with error (a).}
\end{table}

\begin{table}[ht]
\centering
\begin{tabular}{rrrrrrrr}
  \hline
 & $\sigma_ 1$ & $\sigma_ 2$ & $\sigma_ 3$ & $\sigma_ 4$ & $\sigma_ 5$ & $\sigma_ 6$ & $\sigma_ 7$ \\
  \hline
F-max   & 1.000 & 1.000 & 1.000 & 0.863 & 0.395 & 0.184 & 0.106 \\ 
  $p$-min   & 0.000 & 0.000 & 0.000 & 0.000 & 0.000 & 0.000 & 0.000 \\ 
  ERL   & 0.979 & 0.975 & 0.980 & 0.943 & 0.606 & 0.292 & 0.188 \\ 
  Cont   & 0.979 & 0.975 & 0.980 & 0.824 & 0.378 & 0.170 & 0.105 \\ 
  Area   & 0.979 & 0.975 & 0.980 & 0.929 & 0.544 & 0.257 & 0.150 \\ 
   \hline
\end{tabular}
\caption{\label{M1a}Empirical powers of all studied methods and 7 standard deviations for model M1 with error (a).}
\end{table}

\begin{table}[ht]
\centering
\begin{tabular}{rrrrrrrr}
  \hline
 & $\sigma_ 1$ & $\sigma_ 2$ & $\sigma_ 3$ & $\sigma_ 4$ & $\sigma_ 5$ & $\sigma_ 6$ & $\sigma_ 7$ \\    \hline
F-max   & 1.000 & 1.000 & 1.000 & 0.886 & 0.464 & 0.242 & 0.151 \\ 
  $p$-min   & 0.000 & 0.000 & 0.000 & 0.000 & 0.000 & 0.000 & 0.000 \\ 
  ERL   & 0.973 & 0.980 & 0.978 & 0.942 & 0.594 & 0.325 & 0.233 \\ 
  Cont   & 0.973 & 0.980 & 0.978 & 0.844 & 0.399 & 0.215 & 0.163 \\ 
  Area   & 0.973 & 0.980 & 0.978 & 0.927 & 0.568 & 0.303 & 0.215 \\ 
   \hline
\end{tabular}
\caption{\label{M1f}Empirical powers of all studied methods and 7 standard deviations for model M1 with error (b). }
\end{table}

\begin{table}[ht]
\centering
\begin{tabular}{rrrrrrrr}
  \hline
 & $\sigma_ 1$ & $\sigma_ 2$ & $\sigma_ 3$ & $\sigma_ 4$ & $\sigma_ 5$ & $\sigma_ 6$ & $\sigma_ 7$ \\
  \hline
F-max   & 1.000 & 1.000 & 0.959 & 0.267 & 0.097 & 0.067 & 0.058 \\ 
  p-min   & 0.000 & 0.000 & 0.000 & 0.000 & 0.000 & 0.000 & 0.000 \\ 
  ERL   & 0.986 & 0.976 & 0.979 & 0.948 & 0.520 & 0.269 & 0.175 \\ 
  Cont   & 0.986 & 0.976 & 0.945 & 0.228 & 0.090 & 0.065 & 0.050 \\ 
  Area   & 0.986 & 0.976 & 0.979 & 0.879 & 0.395 & 0.191 & 0.121 \\ 
 \hline
\end{tabular}
\caption{\label{M1c}Empirical powers of all studied methods and 7 standard deviations for model M1 with error (c).}
\end{table}

\begin{table}[ht]
\centering
\begin{tabular}{rrrrrrrr}
  \hline
 & $\sigma_ 1$ & $\sigma_ 2$ & $\sigma_ 3$ & $\sigma_ 4$ & $\sigma_ 5$ & $\sigma_ 6$ & $\sigma_ 7$ \\
  \hline
F-max   & 1.000 & 1.000 & 0.989 & 0.334 & 0.080 & 0.054 & 0.048 \\ 
  p-min   & 0.000 & 0.000 & 0.000 & 0.000 & 0.000 & 0.000 & 0.000 \\ 
  ERL   & 0.980 & 0.986 & 0.982 & 0.943 & 0.617 & 0.299 & 0.199 \\ 
  Cont   & 0.980 & 0.986 & 0.958 & 0.363 & 0.097 & 0.058 & 0.054 \\ 
  Area   & 0.980 & 0.986 & 0.982 & 0.909 & 0.492 & 0.229 & 0.157 \\  
   \hline
\end{tabular}
\caption{\label{M1d}Empirical powers of all studied methods and 7 standard deviations for model M1 with error (d).}
\end{table}

\begin{table}[ht]
\centering
\begin{tabular}{rrrrrrrr}
  \hline
 & $\sigma_ 1$ & $\sigma_ 2$ & $\sigma_ 3$ & $\sigma_ 4$ & $\sigma_ 5$ & $\sigma_ 6$ & $\sigma_ 7$ \\
  \hline
F-max   & 1.000 & 0.997 & 0.979 & 0.857 & 0.609 & 0.410 & 0.294 \\ 
  p-min   & 0.000 & 0.000 & 0.000 & 0.000 & 0.000 & 0.000 & 0.000 \\ 
  ERL   & 0.980 & 0.970 & 0.977 & 0.967 & 0.930 & 0.844 & 0.751 \\ 
  Cont   & 0.980 & 0.968 & 0.960 & 0.865 & 0.662 & 0.491 & 0.335 \\ 
  Area   & 0.980 & 0.970 & 0.975 & 0.947 & 0.867 & 0.714 & 0.577 \\ 
   \hline
\end{tabular}
\caption{\label{M1d}Empirical powers of all studied methods and 7 standard deviations for model M1 with error (e).}
\end{table}

\begin{table}[ht]
\centering
\begin{tabular}{rrrrrrrr}
  \hline
 & $\sigma_ 1$ & $\sigma_ 2$ & $\sigma_ 3$ & $\sigma_ 4$ & $\sigma_ 5$ & $\sigma_ 6$ & $\sigma_ 7$ \\  \hline
F-max   & 1.000 & 1.000 & 1.000 & 0.998 & 0.632 & 0.296 & 0.143 \\ 
  $p$-min   & 0.000 & 0.000 & 0.000 & 0.000 & 0.000 & 0.000 & 0.000 \\ 
  ERL   & 0.981 & 0.980 & 0.985 & 0.985 & 0.776 & 0.325 & 0.126 \\ 
  Cont   & 0.981 & 0.980 & 0.985 & 0.971 & 0.522 & 0.249 & 0.129 \\ 
  Area   & 0.981 & 0.980 & 0.985 & 0.984 & 0.749 & 0.340 & 0.141 \\ 
   \hline
\end{tabular}
\caption{\label{M1e}Empirical powers of all studied methods and 7 standard deviations for model M1 with error (f).}
\end{table}

\begin{table}[ht]
\centering
\begin{tabular}{rrrrrrrr}
  \hline
 & $\sigma_ 1$ & $\sigma_ 2$ & $\sigma_ 3$ & $\sigma_ 4$ & $\sigma_ 5$ & $\sigma_ 6$ & $\sigma_ 7$ \\    \hline
F-max   & 0.999 & 0.950 & 0.841 & 0.658 & 0.522 & 0.421 & 0.352 \\ 
  $p$-min   & 0.000 & 0.000 & 0.000 & 0.000 & 0.000 & 0.000 & 0.000 \\ 
  ERL   & 0.986 & 0.980 & 0.977 & 0.979 & 0.981 & 0.984 & 0.978 \\ 
  Cont   & 0.985 & 0.917 & 0.790 & 0.604 & 0.448 & 0.358 & 0.300 \\ 
  Area   & 0.986 & 0.980 & 0.977 & 0.979 & 0.981 & 0.979 & 0.970 \\ 
   \hline
\end{tabular}
\caption{\label{M1f}Empirical powers of all studied methods and 7 standard deviations for model M1 with error (g).}
\end{table}

\begin{table}[ht]
\centering
\begin{tabular}{rrrrrrrr}
  \hline
 & $\sigma_ 1$ & $\sigma_ 2$ & $\sigma_ 3$ & $\sigma_ 4$ & $\sigma_ 5$ & $\sigma_ 6$ & $\sigma_ 7$ \\
  \hline
F-max   & 1.000 & 1.000 & 0.917 & 0.277 & 0.089 & 0.068 & 0.045 \\ 
  $p$-min   & 0.000 & 0.000 & 0.000 & 0.000 & 0.000 & 0.000 & 0.000 \\ 
  ERL   & 0.975 & 0.976 & 0.732 & 0.201 & 0.071 & 0.064 & 0.052 \\ 
  Cont   & 0.975 & 0.975 & 0.799 & 0.230 & 0.077 & 0.058 & 0.048 \\ 
  Area   & 0.975 & 0.980 & 0.825 & 0.254 & 0.085 & 0.068 & 0.049 \\ 
   \hline
\end{tabular}
\caption{\label{M1'a}Empirical powers of all studied methods and 7 standard deviations for model M1' with error (a).}
\end{table}

\begin{table}[ht]
\centering
\begin{tabular}{rrrrrrrr}
  \hline
 & $\sigma_ 1$ & $\sigma_ 2$ & $\sigma_ 3$ & $\sigma_ 4$ & $\sigma_ 5$ & $\sigma_ 6$ & $\sigma_ 7$ \\    \hline
F-max   & 1.000 & 1.000 & 0.906 & 0.280 & 0.107 & 0.076 & 0.060 \\ 
  $p$-min   & 0.000 & 0.000 & 0.000 & 0.000 & 0.000 & 0.000 & 0.000 \\ 
  ERL   & 0.980 & 0.962 & 0.705 & 0.193 & 0.076 & 0.071 & 0.053 \\ 
  Cont   & 0.980 & 0.962 & 0.811 & 0.232 & 0.099 & 0.068 & 0.070 \\ 
  Area   & 0.980 & 0.968 & 0.830 & 0.243 & 0.100 & 0.076 & 0.061 \\ 
   \hline
\end{tabular}
\caption{\label{M1f}Empirical powers of all studied methods and 7 standard deviations for model M1' with error (b).}
\end{table}

\begin{table}[ht]
\centering
\begin{tabular}{rrrrrrrr}
  \hline
 & $\sigma_ 1$ & $\sigma_ 2$ & $\sigma_ 3$ & $\sigma_ 4$ & $\sigma_ 5$ & $\sigma_ 6$ & $\sigma_ 7$ \\
  \hline
F-max   & 1.000 & 1.000 & 1.000 & 1.000 & 1.000 & 0.965 & 0.377 \\ 
  p-min   & 0.000 & 0.000 & 0.000 & 0.000 & 0.000 & 0.000 & 0.000 \\ 
  ERL   & 0.982 & 0.976 & 0.984 & 0.975 & 0.978 & 0.978 & 0.972 \\ 
  Cont   & 0.982 & 0.976 & 0.984 & 0.975 & 0.978 & 0.686 & 0.339 \\ 
  Area   & 0.982 & 0.976 & 0.984 & 0.975 & 0.978 & 0.978 & 0.915 \\ 
 \hline
\end{tabular}
\caption{\label{M1'c}Empirical powers of all studied methods and 7 standard deviations for model M1' with error (c).}
\end{table}

\begin{table}[ht]
\centering
\begin{tabular}{rrrrrrrr}
  \hline
 & $\sigma_ 1$ & $\sigma_ 2$ & $\sigma_ 3$ & $\sigma_ 4$ & $\sigma_ 5$ & $\sigma_ 6$ & $\sigma_ 7$ \\
  \hline
F-max   & 1.000 & 0.993 & 0.549 & 0.060 & 0.066 & 0.062 & 0.048 \\ 
  p-min   & 0.000 & 0.000 & 0.000 & 0.000 & 0.000 & 0.000 & 0.000 \\ 
  ERL   & 0.979 & 0.973 & 0.752 & 0.188 & 0.097 & 0.074 & 0.056 \\ 
  Cont   & 0.979 & 0.907 & 0.445 & 0.080 & 0.053 & 0.054 & 0.047 \\ 
  Area   & 0.979 & 0.974 & 0.800 & 0.194 & 0.085 & 0.072 & 0.048 \\ 
   \hline
\end{tabular}
\caption{\label{M1'd}Empirical powers of all studied methods and 7 standard deviations for model M1' with error (d).}
\end{table}

\begin{table}[ht]
\centering
\begin{tabular}{rrrrrrrr}
  \hline
 & $\sigma_ 1$ & $\sigma_ 2$ & $\sigma_ 3$ & $\sigma_ 4$ & $\sigma_ 5$ & $\sigma_ 6$ & $\sigma_ 7$ \\
  \hline
F-max   & 0.782 & 0.496 & 0.351 & 0.190 & 0.113 & 0.086 & 0.065 \\ 
  p-min   & 0.000 & 0.000 & 0.000 & 0.000 & 0.000 & 0.000 & 0.000 \\ 
  ERL   & 0.752 & 0.505 & 0.384 & 0.235 & 0.167 & 0.084 & 0.093 \\ 
  Cont   & 0.791 & 0.526 & 0.371 & 0.221 & 0.134 & 0.089 & 0.072 \\ 
  Area   & 0.826 & 0.583 & 0.422 & 0.259 & 0.163 & 0.093 & 0.087 \\ 
   \hline
\end{tabular}
\caption{\label{M1'd}Empirical powers of all studied methods and 7 standard deviations for model M1' with error (e).}
\end{table}

\begin{table}[ht]
\centering
\begin{tabular}{rrrrrrrr}
  \hline
 & $\sigma_ 1$ & $\sigma_ 2$ & $\sigma_ 3$ & $\sigma_ 4$ & $\sigma_ 5$ & $\sigma_ 6$ & $\sigma_ 7$ \\
  \hline
F-max   & 1.000 & 1.000 & 0.963 & 0.358 & 0.089 & 0.058 & 0.057 \\ 
  $p$-min   & 0.000 & 0.000 & 0.000 & 0.000 & 0.000 & 0.000 & 0.000 \\ 
  ERL   & 0.975 & 0.978 & 0.720 & 0.110 & 0.059 & 0.051 & 0.045 \\ 
  Cont   & 0.975 & 0.979 & 0.899 & 0.275 & 0.083 & 0.062 & 0.047 \\ 
  Area   & 0.975 & 0.979 & 0.924 & 0.258 & 0.077 & 0.056 & 0.048 \\ 
   \hline
\end{tabular}
\caption{\label{M1'e}Empirical powers of all studied methods and 7 standard deviations for model M1' with error (f).}
\end{table}

\begin{table}[ht]
\centering
\begin{tabular}{rrrrrrrr}
  \hline
 & $\sigma_ 1$ & $\sigma_ 2$ & $\sigma_ 3$ & $\sigma_ 4$ & $\sigma_ 5$ & $\sigma_ 6$ & $\sigma_ 7$ \\
   \hline
F-max   & 0.412 & 0.228 & 0.146 & 0.101 & 0.084 & 0.073 & 0.067 \\ 
  $p$-min   & 0.000 & 0.000 & 0.000 & 0.000 & 0.000 & 0.000 & 0.000 \\ 
  ERL   & 0.894 & 0.589 & 0.311 & 0.193 & 0.186 & 0.132 & 0.120 \\ 
  Cont   & 0.408 & 0.189 & 0.124 & 0.094 & 0.081 & 0.063 & 0.073 \\ 
  Area   & 0.924 & 0.612 & 0.424 & 0.292 & 0.208 & 0.154 & 0.134 \\ 
   \hline
\end{tabular}
\caption{\label{M1'f}Empirical powers of all studied methods and 7 standard deviations for model M1' with error (g).}
\end{table}

\end{document}